\newtheorem{assumption}{Assumption}
\newcommand{\W}{\mathbf{w}}
\newcommand{\T}{^\top}
\newcommand{\Pz}{P_{\mathrm{init}}}
\newcommand{\Z}{\mathbf{z}}
\newcommand{\V}{\mathbf{v}}
\newcommand{\U}{\mathbf{u}}
\newcommand{\X}{\mathbf{x}}
\newcommand{\defmath}{\vcentcolon =}
\newcommand{\x}{\mathrm{x}}
\renewcommand{\u}{\mathrm{u}}
\newcommand{\w}{\mathrm{w}}
\newcommand{\f}{\mathrm{f}}
\DeclareAcronym{RMPC}{
  short = RMPC,
  long  = robust MPC,
}
\DeclareAcronym{SLS}{
  short = SLS,
  long  = system level synthesis,
}
\DeclareAcronym{MPC}{
  short = MPC,
  long  = model predictive control,
}
\DeclareAcronym{LTV}{
  short = LTV,
  long  = linear time-varying,
}
\DeclareAcronym{RPI}{
  short = RPI,
  long  = robust positive invariant,
}
\DeclareAcronym{QP}{
  short = QP,
  long  = quadratic program,
}
\DeclareAcronym{CBF}{
  short = CBF,
  long  = control barrier function,
  long-plural-form = control barrier functions,
}
\DeclareAcronym{MPSF}{
  short = MPSF,
  long  = model predictive safety filter,
  long-plural-form = model predictive safety filters,
}
\DeclareAcronym{RCI}{
  short = RCI,
  long  = robust control invariant,
}
\title[Predictive safety filter using system level synthesis]{Predictive safety filter using system level synthesis}\usepackage{times}
\author{\Name{Antoine P. Leeman}$\textsuperscript{\dag}$\Email{aleeman@ethz.ch}\thanks{This work has been supported by the European Space Agency under OSIP 4000133352, the Swiss Space Center, and the Swiss National Science Foundation under NCCR Automation (grant agreement 51NF40 180545).}\\
\Name{Johannes K\"ohler}$\textsuperscript{\dag}$ \Email{jkoehle@ethz.ch}\\
\Name{Samir Bennani}$\textsuperscript{\ddag}$ \Email{Samir.Bennani@esa.int}\\
\Name{Melanie N. Zeilinger}$\textsuperscript{\dag}$ \Email{mzeilinger@ethz.ch}\\
\addr $\textsuperscript{\dag}$Institute for Dynamic Systems and Control, ETH Zurich, Switzerland\\
\addr $\textsuperscript{\ddag}$ESTEC, European Space Agency, the Netherlands
}
\begin{document}

\maketitle
\begin{abstract}
Safety filters provide modular techniques to augment {potentially} unsafe control inputs (e.g. from learning-based controllers or humans) with safety guarantees in the form of constraint satisfaction. 
In this paper, we present an improved model predictive safety filter (MPSF) formulation, which incorporates system level synthesis techniques in the design. 
The resulting SL-MPSF scheme ensures safety for linear systems subject to bounded disturbances in an enlarged safe set. It requires less severe and frequent modifications of potentially unsafe control inputs compared to existing MPSF formulations to certify safety. 
In addition, we propose an explicit variant of the SL-MPSF formulation, which maintains scalability, and reduces the required online computational effort - the main drawback of the MPSF. 
The benefits of the proposed system level safety filter formulations compared to state-of-the-art MPSF formulations are demonstrated using a numerical example. 
\end{abstract}
\begin{keywords}%
Safety Filter, Safety certification, Model Predictive Control, System Level Synthesis
\end{keywords}
\section{Introduction}
Learning-based controllers have demonstrated high performance for control of uncertain systems in complex environments, see, e.g. ~\cite{Hwangbo2019, Ibarz2021, mnih2015human} and references therein. However, their potential can still not be fully exploited in many industrial applications due to the lack of safety guarantees {i.e., in terms of constraint satisfaction}.

Safety filters are modular techniques that can be combined with any controller and modify{/filter} a potentially unsafe input as little as needed while ensuring that the state remains in some safe set.
Methods based on Hamilton-Jacobi reachability analysis~\cite{Gillula2012,Fisac2019ASystems} provide a general approach for computing safe sets. However, their practical applicability is often limited due to scalability issues associated with solving nonlinear partial differential equations. 
\Acp{CBF} provide a popular alternative to define safe sets~\cite{Ames2019}, but their design commonly requires system-specific intuition related to the choice of the control Lyapunov function. 
\Acp{MPSF}~\cite{Wabersich2018LinearControl,Wabersich2021ASystems} provide a flexible way to implicitly define a safe set and a safety filter as the solution to an optimisation problem, at the cost of increased online computational effort. 

Overall, these three main approaches used to define safety filters offer different trade-offs between offline design complexity, online computational demand, and conservativeness in the design. 
Conservativeness is reflected by the safe set size and the amount of control intervention. Maximising the size of the safe set increases the operational range of the system and enables, e.g. more extensive (safe) exploration particularly in the learning setting. 
In addition, safety filter interventions affect the performance of the learning-based (or the human) controller, such as stability~\mbox{\cite{dai2021lyapunov}}, operational cost, or tracking error.
This paper presents an \ac{MPSF} approach that reduces conservativeness compared with existing \ac{MPSF} {by increasing the size of the safe set, and reducing the required safety filter interventions,} and thereby enables higher performance for problems with larger model uncertainty.

The main concept of the \ac{MPSF} is based on \ac{MPC}: a safe backup trajectory is predicted, ensuring the system could be steered to a safe (invariant) set, while minimally modifying the potentially unsafe input $u_{\mathcal{L}}$. The safe set and safe backup strategy in \ac{MPSF} schemes are thereby characterised by the solution to an optimisation problem. A key challenge is the computation of safe backup trajectories in the presence of disturbances and model uncertainties, which are particularly relevant in the {learning setting}. The first \ac{MPSF} was presented in~\cite{Wabersich2018LinearControl}, and exploits the linear \ac{RMPC}  from~\cite{Mayne2005RobustDisturbances}. This formulation was later extended to distributed systems~\cite{MUNTWILER20205258}, parametric uncertainties~\cite{Didier2021AdaptiveControl}, nonlinear systems~\cite{Wabersich2021ASystems}, and stochastic noise~\cite{Wabersich2022ProbabilisticControl}.
Notably, all of these \ac{MPSF} formulations use {an offline fixed auxiliary tube-controller} to mitigate the effect of disturbances/uncertainties. While the tube-controller allows for a scalable and efficient offline design, this simplification can result in overly cautious \ac{MPSF} schemes, limiting their practical performance.

\paragraph{Contribution} In this paper, we propose a novel \ac{MPSF} formulation focused on a robust design for linear systems, that addresses the main limitation of the fixed tube-controller, and hence significantly mitigates the conservativeness of existing {\ac{MPSF}} schemes.
To achieve this, we combine \ac{MPSF} with \ac{SLS}~\cite{Anderson2019} to create a safety filter that ensures safety with reduced control interventions.
In particular, \ac{SLS} enables optimisation over affine feedback policies in \ac{RMPC}~\cite{Chen2021SystemApproximation,Sieber2021SystemMPC}, compare also disturbance feedback \ac{MPC}~\cite{Goulart2006OptimizationConstraints} (Section~\ref{sec:paramerisation}). As a result, the proposed system level \ac{MPSF} (SL-\ac{MPSF}) overcomes the limitation of the fixed tube-controller in existing MSPF schemes, and ensures a larger {safe set}, {and lower control intervention} at the cost of increased computational complexity. The proposed flexible SL-\ac{MPSF} hence simultaneously optimises the safe backup trajectory and tube-controller (Section~\ref{sec:sl-mpc}). 
Notably, the presented SL-\ac{MPSF} formulation uses a general terminal set to ensure recursive feasibility and does not require the finite-impulse-response constraint from~\cite{Sieber2021SystemMPC}, which constitutes an alternative \ac{RMPC} formulation based on \ac{SLS}. 

The second contribution is an explicit safety filter that leverages the flexibility of \ac{SLS} to design an explicit safe set and an explicit backup control law offline (Section \ref{sec:explicit}). The resulting scheme is more conservative than SL-\ac{MPSF} but {requires no online optimisation}.
The explicit safe set is constructed by imposing the disturbance reachable set to safely return to the safe set within the prediction horizon. The safe set is defined using a simple shape and is hence easy to implement and scalable. 
Similar methods are used in~\cite{Parsi2022ComputationallyTightening} and \cite[Section 4.2]{Sieber2021SystemMPC} to compute feedback policies and disturbance reachable sets offline, however, they still require online optimisation to ensure robust constraint satisfaction.

Both safety filters proposed in this paper offer a different trade-off between conservativeness, scalability, offline design complexity and online computation demand compared with available techniques and thereby enlarge the scope of this promising safety approach.
The benefits of both methods are illustrated by comparing them against state-of-the-art safety filters using a numerical example (Section~\ref{sec:num}).

\paragraph{Notation} The set of non-negative integers is given by $\mathbb{N}_{\ge 0}$. For a vector $x\in \mathbb{R}^n$, we denote the $p$-norm with $p \in \{1,2,\infty\}$ by $\|x\|_p$. The infinity-norm ball is given by $\mathcal{B}_\infty^n \defmath \{w\in \mathbb{R}^n | \|w\|_\infty \le 1 \}$. 
For two sets $\mathcal{W}_1$ and $\mathcal{W}_2$, the Minkowski sum is defined as $\mathcal{W}_1 \oplus \mathcal{W}_2 \defmath  \{ w_1 + w_2 |~w_1 \in \mathcal{W}_1, w_2 \in \mathcal{W}_2\}$.
{We also define $\mathcal{W}^k$ as the Cartesian product of $\mathcal{W}$ with itself $k$ times, i.e., $\mathcal{W}^k = \mathcal{W} \times \dots \times \mathcal{W}$ ($k$ times).}
The identity matrix is denoted by $\mathcal{I}_{n}\in\mathbb{R}^{n\times n}$ and the dimensions are omitted if they can be inferred from the context. 
Let $0_{p,q}\in \mathbb{R}^{p\times q}$ and $0_{n}\in \mathbb{R}^{n}$ be respectively a matrix and a vector of zeros.
We denote stacked vectors by $(a,b) = [a\T~b\T]\T$.
The block diagonal matrix consisting of matrices $A_1,\dots,A_T$ is denoted by $\mathcal{A}=\mathrm{diag}(A_1,\dots,A_T)$.
Let $\mathcal{L}^{N,p\times q}$ denote the set of all block lower-triangular matrices with the following structure%
\begin{equation}
     M = \begin{bmatrix} M^{0,0} & 0_{p,q} & \dots & 0_{p,q} \\ M^{1,1} & M^{1,0} & \dots & 0_{p,q} \\ \vdots & \vdots & \ddots & \vdots \\M^{N,N} & M^{N,N-1} & \dots & M^{N,0} \end{bmatrix},
\nonumber%
 \end{equation}
where $M^{i,j}\in \mathbb{R}^{p\times q}$. We denote the $k^\text{th}$ block row of $M$ as $M^k\defmath [M^{k,k} \dots~M^{k,0},0_{p,q(N-1-k)}]$.
\section{Preliminaries}
We consider a linear system%
\begin{equation}
    x(t+1) = A x(t) + B u(t) + B_\w w(t),~t\in \mathbb{N}_{\ge 0},
    \label{eq:lin}
\end{equation}
with state $x(t)\in\mathbb{R}^n$, control input $u(t)\in\mathbb{R}^m$ and unknown disturbance $w(t)\in B_\textrm{w} \mathcal{W}\subset \mathbb{R}^n$, at discrete time instances $t\in\mathbb{N}_{\ge 0}$. For simplicity of exposition, we consider a box $ \mathcal{W} \defmath \mathcal{B}_\infty^{n}$, and assume that $B_\w$ is an invertible square matrix, compare also Remark~\ref{rem:ic_dist_size} below.
The system is {safe if it satisfies the} following state and input constraints%
\begin{equation*}
\begin{aligned}
     x(t) \in \mathcal{X} \defmath \{ x|~A_{\x,j}x \le b_{\x,j},~j=1,\dots, n_\x\},~
     u(t) \in \mathcal{U} \defmath \{ u|~A_{\u,j}u \le b_{\u,j},~j=1,\dots, n_\u\},
 \end{aligned}
\end{equation*}
 for all $t\in\mathbb{N}_{\geq 0}$, with $A_{\x,j}\in \mathbb{R}^{n}$, $b_{\x,j}>0$, $A_{\u,j}\in \mathbb{R}^{n}$, $b_{\u,j}>0$. We assume that $\mathcal{U}$ is compact.
In this paper, we provide a safety filter and safe set, which enable a safety certificate for arbitrary control signals $u_\mathcal{L}(t)\in\mathbb{R}^m$. 
\begin{definition}
\label{def:safe_set}
A set $\mathcal{S}\subseteq \mathcal{X}$ and a control law  $u_\mathcal{S}(x(t), u_\mathcal{L}(t),t)$, with $u_\mathcal{S}: \mathbb{R}^n \times \mathbb{R}^m \times \mathbb{N}_{\ge 0} \mapsto \mathcal{U}$ are called a safe set and a safety filter, respectively, if the application of the safety filter $u = u_\mathcal{S}$ in \eqref{eq:lin}, ensures robust constraint satisfaction, i.e., $x(t)\in \mathcal{X} ~\forall t\ge \bar t~\forall w(t)\in \mathcal{B}^n_\infty$, if $x(\bar t)\in \mathcal{S}$.
\end{definition}
We search for a safety filter $u_\mathcal{S}$ and a corresponding safe set $\mathcal{S}\subseteq\mathcal{X}$, such that state and input constraints satisfaction can be guaranteed for all future time steps and for all disturbances. 
While safety is the primary objective, the safety filter should also interfere as little as possible, i.e., whenever possible the safety filter should yield $u_{\mathcal{S}}=u_{\mathcal{L}}$. 
The main benefit of a modular safety filter according to Definition~\ref{def:safe_set} is that it can certify any potentially unsafe control input $u_\mathcal{L}$, including the important case of learning-based controllers or humans.

In the nominal case (i.e., without disturbance $w(t)$), the \ac{MPSF} formulation proposed in~\cite{Wabersich2018LinearControl} consists in solving the following optimisation problem at each time step%
\begin{subequations}\label{eq:nominal_MPSF}
    \begin{align}
    \min_{\Z, \V} \|\V_0 - u_\mathcal{L}(t)\|_2^2&
    \quad\text{s.t.}\quad & \Z_{k+1} = A\Z_k + B\V_k,~x(t) = \Z_0, ~k= 0,\dots, N-1,\label{eq:nom_pre}\\
    && \Z_k\in \mathcal{X},~\V_k\in \mathcal{U},~\Z_N\in \mathcal{X}_\f, ~k= 0,\dots, N-1,
    \end{align}
\end{subequations}
where $\Z_k,\V_k$ denote the nominal state and input, respectively, and $\mathcal{X}_\f\subseteq \mathcal{X}\subseteq \mathbb{R}^n$ is a positively invariant set as commonly required in \ac{MPC}, see e.g.~\cite{Kouvaritakis2016ModelStochastic}. 
In order to reduce the effects of model uncertainties, \acp{MPSF}~\cite{Wabersich2018LinearControl,Wabersich2021ASystems} employ robust \ac{MPC} techniques based on a so-called tube-controller, i.e., {$u(t) = v(t) + K(x(t) - z(t))$, where $K$ is fixed offline.}
In the following, we develop a robust counterpart of \eqref{eq:nominal_MPSF} based on \ac{SLS} (Section~\ref{sec:paramerisation}), where we not only optimise the (nominal) backup trajectory $\Z$ and $\V$, but also a tube-controller and disturbance reachable set for the error between the uncertain and nominal system. The resulting feedback policy guarantees robust constraint satisfaction and reduces conservativeness compared to other \acp{MPSF} (Section~\ref{sec:sl-mpc}). While offering a flexible approach, the required online optimisation can be computationally demanding. To address this limitation, we introduce an explicit system level safety filter (Section~\ref{sec:explicit}), {which removes most of the offline computational complexity}. Notably, both methods ensure robust constraint satisfaction.

\section{System Level Model Predictive Safety Filter}
\label{sec:implicit_safe_set}
In the following, we first present the parametrisation of the disturbance reachable sets under affine tube-controllers, before introducing the SL-\ac{MPSF} formulation.
\subsection{Parameterisation of affine controllers} 
\label{sec:paramerisation}
 The employed parameterisation of affine feedback policies for LTI systems relies on \ac{SLS} techniques~\cite{Goulart2006OptimizationConstraints,Anderson2019}. We define the prediction state $\X_k\in \mathbb{R}^n$ and input $\U_k\in \mathbb{R}^m$ for system~\eqref{eq:lin} as%
\begin{equation}
\begin{aligned}
        \X_{k+1} &= A \X_k + B \U_k + B_\w \W_k,~k= 0,\dots, N-1,
    \label{eq:nom_dyn}
\end{aligned}
\end{equation}
where $N\in\mathbb{N}_{\ge 1}$ is the prediction horizon and $\W_k \in \mathcal{B}_\infty^{n}={\mathcal{W}}$. 
We define the error (deviation) between the uncertain and nominal system states and inputs~\eqref{eq:nom_pre} with $\Delta \X_k \defmath \X_k - \Z_k$ and $\Delta \U_k \defmath \U_k - \V_k$.
In the following, we consider initial conditions of the form $\Delta \X_0 \in \mathcal{X}_0\defmath\Pz \mathcal{B}_\infty^{n}$, with $\Pz\in\mathbb{R}^{n \times n}$. This general initial set formulation will become essential for the explicit safe set (Section \ref{sec:explicit}), while the online optimisation-based SL-\ac{MPSF} (Section \ref{sec:sl-mpc}) only considers the special case of fixed initial conditions with $\Pz = 0_{n,n}$ and  $ \Delta \X_0 =0_n$. 
We can write the error dynamics compactly as%
\begin{equation}
    \Delta \X = \mathcal{Z}\mathcal{A}\Delta\X + \mathcal{Z}\mathcal{B}\Delta\U + \mathcal{E}\bm{\delta},
    \label{eq:LTV}
\end{equation}
with $\Delta \X = (\Delta \X_0, \dots, \Delta \X_N)$, $\Delta \U = (\Delta \U_0, \dots, \Delta \U_N)$, $\W\defmath (\W_0, \dots, \W_{N-1})$, $\bm{\delta} \defmath (\Delta \X_0, \W)\in \mathcal{X}_0 \times {\mathcal{W}}^{N}$, $\mathcal{A}\defmath \text{diag}(A, \dots, A, 0_{n,n})$, $\mathcal{B}\defmath \text{diag}(B,\dots, B, 0_{n,m})$, $\mathcal{E} \defmath \text{diag}(\mathcal{I}, B_\w, \dots, B_\w)\in \mathcal{L}^{N, n\times n},$ and the block downshift operator $\mathcal{Z}\in\mathcal{L}^{N,n \times n}$, i.e., a matrix
with identity matrices along its first block sub-diagonal and zeros elsewhere. 
We introduce the causal linear feedback controller $\Delta \U = \mathcal{K} \Delta \X$, with $\mathcal{K}\in \mathcal{L}^{N,m \times n}$, i.e.,%
\begin{equation}
    \U = \V + \mathcal{K} (\X - \Z), \label{eq:feedback}
\end{equation}
yielding $ \U_k = \V_k + \sum_{j=0}^{k} K^{k,j} (\X_{k-j} - \Z_{k-j})$. Using this feedback, we can write the closed-loop error dynamics as%
\begin{equation}
\begin{aligned}
\begin{bmatrix} \X \\\U \end{bmatrix}  - \begin{bmatrix} \Z \\\V \end{bmatrix} = \begin{bmatrix}
    \Delta \X\\\Delta \U
\end{bmatrix} = \begin{bmatrix} \mathcal{Z}(\mathcal{A} + \mathcal{B}\mathcal{K})\Delta \X + \mathcal{E}\bm{\delta}\\ \mathcal{K} \Delta \X \end{bmatrix}
=: \begin{bmatrix} {\Phi}_\x \\ {\Phi}_\u \end{bmatrix} \bm{\delta} =: \left[\begin{tabular}{c|c}
$\Phi_{\x,0}$ & $\Phi_{\x,\w}$\\
\hline
$ \Phi_{\u,0}$ & $ \Phi_{\u,\w}$\\
\end{tabular}
\right]
\left[
\begin{tabular}{c}
$\Delta \X_0$\\
\hline
$\W$
\end{tabular}\right],
\label{eq:sys_rep}
\end{aligned}
\end{equation}
where $\Phi_\x \in \mathcal{L}^{N,n\times n}$, $\Phi_\u \in \mathcal{L}^{N,m\times n}$.
The matrices ${\Phi}_{\x}$ and ${\Phi}_\u$ are called the system responses from the disturbances to the closed-loop error state and input, respectively. The following proposition shows that the closed-loop responses under arbitrary affine feedbacks lie on a linear subspace.

\begin{proposition}\cite[adapted from Theorem 1]{Chen2021SystemApproximation}
\label{prop:map}
Consider some disturbance sequence $\bm{\delta}\in \mathcal{X}_0 \times{\mathcal{W}}^{N}$. 
\begin{enumerate}[label=\alph*)]
    \item Any trajectory $\Delta \X$, $\Delta \U$ satisfying the dynamics~\eqref{eq:LTV},~\eqref{eq:feedback} also satisfy~\eqref{eq:sys_rep} with some ${\Phi}_\x \in \mathcal{L}^{N,n \times n}$, ${\Phi}_\u \in \mathcal{L}^{N,m \times n}$ lying on the subspace%
\begin{equation}
\left.
    \begin{aligned}
        &\left[ \mathcal{I} - \mathcal{Z}{\mathcal{A}}\quad -\mathcal{Z}{\mathcal{B}}\right] \left[\begin{array}{c}{\Phi}_{\x}\\{\Phi}_\u \end{array}\right] = \mathcal{E}.
    \end{aligned}
    \right.    
    \label{eq:affine_map}
\end{equation}
\item Let ${\Phi}_{\x}$ and ${\Phi}_\u$ be arbitrary matrices satisfying~\eqref{eq:affine_map}. Then the corresponding $\Delta \X$ and $\Delta \U$ computed with~\eqref{eq:sys_rep} also satisfy~\eqref{eq:LTV},~\eqref{eq:feedback} with ${\mathcal{K}} = {\Phi}_\u {\Phi}_{\x}^{-1}\in \mathcal{L}^{N,m\times n}$.
\end{enumerate}
\end{proposition}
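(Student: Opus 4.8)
The plan is to prove the two directions separately, both by direct manipulation of the block-matrix dynamics. The key observation is that equation~\eqref{eq:LTV} can be rewritten as $(\mathcal{I} - \mathcal{Z}\mathcal{A})\Delta\X = \mathcal{Z}\mathcal{B}\Delta\U + \mathcal{E}\bm{\delta}$, and that $\mathcal{I} - \mathcal{Z}\mathcal{A}$ is invertible because $\mathcal{Z}\mathcal{A}$ is strictly block-lower-triangular (nilpotent), being the product of the downshift $\mathcal{Z}$ with a block-diagonal matrix. The same nilpotency argument applies to $\mathcal{Z}(\mathcal{A}+\mathcal{B}\mathcal{K})$ for any $\mathcal{K}\in\mathcal{L}^{N,m\times n}$, so that $\mathcal{I} - \mathcal{Z}(\mathcal{A}+\mathcal{B}\mathcal{K})$ is likewise invertible with a block-lower-triangular inverse; this is what guarantees the relevant objects stay inside $\mathcal{L}^{N,\cdot\times\cdot}$.

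For part~a), I would start from the closed-loop form in~\eqref{eq:sys_rep}, namely $\Delta\X = \mathcal{Z}(\mathcal{A}+\mathcal{B}\mathcal{K})\Delta\X + \mathcal{E}\bm{\delta}$ and $\Delta\U = \mathcal{K}\Delta\X$, obtained by substituting $\Delta\U = \mathcal{K}\Delta\X$ into~\eqref{eq:LTV}. Solving the first equation gives $\Delta\X = \bigl(\mathcal{I} - \mathcal{Z}(\mathcal{A}+\mathcal{B}\mathcal{K})\bigr)^{-1}\mathcal{E}\,\bm{\delta}$, so I can read off $\Phi_\x \defmath \bigl(\mathcal{I} - \mathcal{Z}(\mathcal{A}+\mathcal{B}\mathcal{K})\bigr)^{-1}\mathcal{E}$ and $\Phi_\u \defmath \mathcal{K}\Phi_\x$; both are products of block-lower-triangular matrices and hence lie in $\mathcal{L}^{N,n\times n}$ and $\mathcal{L}^{N,m\times n}$ respectively. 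It then remains to verify~\eqref{eq:affine_map}: compute $(\mathcal{I} - \mathcal{Z}\mathcal{A})\Phi_\x - \mathcal{Z}\mathcal{B}\Phi_\u = (\mathcal{I} - \mathcal{Z}\mathcal{A} - \mathcal{Z}\mathcal{B}\mathcal{K})\Phi_\x = \bigl(\mathcal{I} - \mathcal{Z}(\mathcal{A}+\mathcal{B}\mathcal{K})\bigr)\Phi_\x = \mathcal{E}$, using $\Phi_\u = \mathcal{K}\Phi_\x$ and the definition of $\Phi_\x$. This closes direction a).

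For part~b), suppose $\Phi_\x,\Phi_\u$ satisfy~\eqref{eq:affine_map}. First I would argue $\Phi_\x$ is invertible: since $\mathcal{E}$ has invertible diagonal blocks (it is $\mathrm{diag}(\mathcal{I}, B_\w,\dots,B_\w)$ with $B_\w$ invertible) and $(\mathcal{I} - \mathcal{Z}\mathcal{A})$ has identity diagonal blocks, comparing diagonal blocks in~\eqref{eq:affine_map} shows $\Phi_\x$ has invertible diagonal blocks, hence is invertible as a block-lower-triangular matrix, and $\Phi_\x^{-1}\in\mathcal{L}^{N,n\times n}$; consequently $\mathcal{K}\defmath\Phi_\u\Phi_\x^{-1}\in\mathcal{L}^{N,m\times n}$. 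Defining $\Delta\X \defmath \Phi_\x\bm{\delta}$ and $\Delta\U\defmath\Phi_\u\bm{\delta} = \mathcal{K}\Delta\X$ gives~\eqref{eq:feedback} immediately. For~\eqref{eq:LTV}, left-multiply~\eqref{eq:affine_map} by $\bm{\delta}$ on the right: $(\mathcal{I}-\mathcal{Z}\mathcal{A})\Phi_\x\bm{\delta} - \mathcal{Z}\mathcal{B}\Phi_\u\bm{\delta} = \mathcal{E}\bm{\delta}$, i.e. $\Delta\X - \mathcal{Z}\mathcal{A}\Delta\X - \mathcal{Z}\mathcal{B}\Delta\U = \mathcal{E}\bm{\delta}$, which is exactly~\eqref{eq:LTV}.

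The only genuinely delicate point — and the one I expect to spend the most care on — is the invertibility and structure-preservation claims: that $\mathcal{I} - \mathcal{Z}(\mathcal{A}+\mathcal{B}\mathcal{K})$ and $\Phi_\x$ are invertible with block-lower-triangular inverses. Both reduce to the elementary facts that a block-lower-triangular matrix with invertible diagonal blocks is invertible with a block-lower-triangular inverse, and that $\mathcal{Z}$ times a block-diagonal matrix is strictly block-lower-triangular (nilpotent of index $N+1$), so $\mathcal{I}$ minus such a matrix has identity diagonal blocks. Everything else is routine block-matrix bookkeeping, and since this proposition is quoted as adapted from~\cite[Theorem 1]{Chen2021SystemApproximation}, I would keep the write-up brief and refer there for the standard details.
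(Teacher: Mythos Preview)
The paper does not actually supply its own proof of this proposition: it is stated with the citation ``adapted from Theorem~1'' of \cite{Chen2021SystemApproximation} and followed immediately by a remark, with no proof environment. So there is nothing in the paper to compare against beyond the implicit reference to the standard SLS argument.

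Your argument is correct and is exactly the standard proof one finds in the SLS literature. The only cosmetic point: in your final paragraph you write that ``$\mathcal{Z}$ times a block-diagonal matrix is strictly block-lower-triangular,'' but for part~a) you actually need (and earlier correctly state) the slightly stronger fact that $\mathcal{Z}$ times any block-\emph{lower-triangular} matrix is strictly block-lower-triangular, since $\mathcal{A}+\mathcal{B}\mathcal{K}$ is block-lower-triangular rather than block-diagonal when $\mathcal{K}\in\mathcal{L}^{N,m\times n}$. This is a wording slip, not a gap in the logic.
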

\begin{remark}
\label{rem:ic_dist_size}
We consider square matrices $B_\w$, $\Pz$ to provide a simple exposition\footnote{For more general disturbance and initial condition sets, Proposition \ref{prop:map} requires $\mathcal{E}$ to be right-invertible, see e.g. \cite{Herold2022AControl}.}.
However, the results can be directly extended to more general disturbances and initial condition sets of the form $w(k)\in \mathcal{B}_\infty^{n_\w}$, $\mathcal{X}_0=P_0 \mathcal{B}_\infty^{n_0}$, compare, e.g.~\cite{Herold2022AControl}.
\end{remark}

For any nominal trajectory $\Z$, $\V$ satisfying \eqref{eq:nom_pre}, and any error feedback  $\Delta \U = \mathcal{K} \Delta \X$ with $\Phi_\x,\Phi_\u$ satisfying~\eqref{eq:affine_map}, the disturbance reachable sets of system~\eqref{eq:sys_rep} are given exactly by%
\begin{equation}
\begin{aligned}
    \X_k &\in \mathcal{R}_\x(\Z_k,\Phi_\x^k)  \defmath \{\Z_k\}\oplus \Phi_{\x,0}^{k}{\mathcal{X}}_0 \oplus \Phi_{\x,\w}^{k} \mathcal{B}_\infty^{N n},\\
    \U_k &\in \mathcal{R}_\u(\V_k,\Phi_\u^k) \defmath  \{\V_k\}\oplus \Phi_{\u,0}^{k}{\mathcal{X}}_0 \oplus \Phi_{\u,\w}^{k} \mathcal{B}_\infty^{Nn}.
\end{aligned}
\end{equation}

Hence, state and input constraint satisfaction can be robustly ensured by constraining the disturbance reachable sets to lie within the respective constraint sets, i.e., $\mathcal{R}_\x(\Z_k,\Phi_\x^k) \subseteq \mathcal{X}$ and $\mathcal{R}_\u(\V_k,\Phi_\u^k) \subseteq \mathcal{U}$ for all $k = 0, \dots, N-1$.
\begin{proposition}
\label{prop:cons}
There exists an affine feedback law of the form \eqref{eq:feedback} such that for any $\bm{\delta} \in \mathcal{X}_0 \times {\mathcal{W}}^{N}$,%
\begin{equation}
\begin{aligned}
        \X_k \in \mathcal{X} \defmath \{ x| A_{\x,j}x \le b_{\x,j},~j=1,\dots, n_\x\},
    \U_k \in \mathcal{U} \defmath \{ u| A_{\u,j}u \le b_{\u,j},~j=1,\dots, n_\u\},
\end{aligned}
\end{equation}
with $\X_k$ and $\U_k$ satisfying \eqref{eq:lin}, if and only if there exist matrices $\Phi_\x$ and $\Phi_\u$ and a nominal trajectory $\Z$, $\V$ satisfying \eqref{eq:affine_map}, \eqref{eq:nom_pre}, and, for $k=0,\dots N-1$, $j_\x = 1, \dots, n_\x$, $j_\u = 1, \dots, n_\u$:%
\begin{equation}
\begin{aligned}
\label{eq:cons_slc}
& A_{\x,j_\x} \Z_k +\|A_{\x,j_\x}\Phi_{\x,0}^{k}\Pz \|_1+\|A_{\x,j_\x}\Phi_{\x,\w}^{k} \|_1 \le b_{\x,j_\x},\\
&     A_{\u,j_\u} \V_k +\|A_{\u,j_\u}\Phi_{\u,0}^{k}\Pz \|_1+\|A_{\u,j_\u}\Phi_{\u,\w}^{k} \|_1 \le b_{\u,j_\u}.
\end{aligned}
\end{equation}
\end{proposition}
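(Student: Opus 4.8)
The plan is to reduce the statement to two ingredients that are essentially already in place: Proposition~\ref{prop:map}, which identifies affine feedback laws~\eqref{eq:feedback} with system responses $\Phi_\x,\Phi_\u$ on the affine subspace~\eqref{eq:affine_map}, and the exact characterisation of the disturbance reachable sets $\mathcal{R}_\x(\Z_k,\Phi_\x^k)$, $\mathcal{R}_\u(\V_k,\Phi_\u^k)$ stated just above the proposition. Granting these, the only real work is to rewrite the set inclusions $\mathcal{R}_\x(\Z_k,\Phi_\x^k)\subseteq\mathcal{X}$ and $\mathcal{R}_\u(\V_k,\Phi_\u^k)\subseteq\mathcal{U}$ as the half-space inequalities~\eqref{eq:cons_slc} via a support-function argument, and to check that the existential quantifiers on the two sides of the claimed equivalence line up.

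First I would establish the equivalence of the feasible sets. For the ``only if'' direction, assume an affine feedback of the form~\eqref{eq:feedback} (with its nominal trajectory $\Z,\V$ satisfying~\eqref{eq:nom_pre}) such that the closed-loop trajectory satisfies $\X_k\in\mathcal{X}$, $\U_k\in\mathcal{U}$ for all $\bm{\delta}\in\mathcal{X}_0\times\mathcal{W}^N$. By Proposition~\ref{prop:map}a) the closed loop admits the representation~\eqref{eq:sys_rep} with some $\Phi_\x\in\mathcal{L}^{N,n\times n}$, $\Phi_\u\in\mathcal{L}^{N,m\times n}$ satisfying~\eqref{eq:affine_map}, so that, as $\bm{\delta}$ ranges over $\mathcal{X}_0\times\mathcal{W}^N$, $\X_k$ ranges \emph{exactly} over $\mathcal{R}_\x(\Z_k,\Phi_\x^k)$ and $\U_k$ over $\mathcal{R}_\u(\V_k,\Phi_\u^k)$; robust constraint satisfaction is therefore equivalent to $\mathcal{R}_\x(\Z_k,\Phi_\x^k)\subseteq\mathcal{X}$ and $\mathcal{R}_\u(\V_k,\Phi_\u^k)\subseteq\mathcal{U}$ for $k=0,\dots,N-1$. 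Conversely, given $\Phi_\x,\Phi_\u$ satisfying~\eqref{eq:affine_map} and $\Z,\V$ satisfying~\eqref{eq:nom_pre}, Proposition~\ref{prop:map}b) produces the affine feedback with $\mathcal{K}=\Phi_\u\Phi_\x^{-1}$ — the inverse exists because the strictly lower-triangular structure of $\mathcal{Z}$ forces the diagonal blocks of $\Phi_\x$ in~\eqref{eq:affine_map} to equal those of $\mathcal{E}=\mathrm{diag}(\mathcal{I},B_\w,\dots,B_\w)$, which are invertible since $B_\w$ is — and again the closed-loop reachable sets are exactly $\mathcal{R}_\x$, $\mathcal{R}_\u$. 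Thus it remains only to show that $\mathcal{R}_\x(\Z_k,\Phi_\x^k)\subseteq\mathcal{X}$, $\mathcal{R}_\u(\V_k,\Phi_\u^k)\subseteq\mathcal{U}$ are equivalent to~\eqref{eq:cons_slc}.

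Second I would carry out the support-function computation. Since $\mathcal{X}=\bigcap_{j_\x}\{x\mid A_{\x,j_\x}x\le b_{\x,j_\x}\}$, the inclusion $\mathcal{R}_\x(\Z_k,\Phi_\x^k)\subseteq\mathcal{X}$ holds iff for every $j_\x$
\[
\max_{x\in\mathcal{R}_\x(\Z_k,\Phi_\x^k)} A_{\x,j_\x}x \;=\; A_{\x,j_\x}\Z_k \;+\; \max_{v\in\mathcal{B}_\infty^{n}} A_{\x,j_\x}\Phi_{\x,0}^{k}\Pz v \;+\; \max_{\W\in\mathcal{B}_\infty^{Nn}} A_{\x,j_\x}\Phi_{\x,\w}^{k}\W \;\le\; b_{\x,j_\x},
\]
where additivity of the support function over the Minkowski sums defining $\mathcal{R}_\x$ was used, together with the fact that $\Delta\X_0$ and $\W$ vary independently. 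Each maximisation is the support function of an $\infty$-norm ball, hence equals the $1$-norm of the corresponding direction, i.e.\ $\max_{\|v\|_\infty\le 1} c\,v=\|c\|_1$ for a row vector $c$; this yields precisely the first line of~\eqref{eq:cons_slc}. The input inequalities follow identically using $\mathcal{R}_\u(\V_k,\Phi_\u^k)$, $A_{\u,j_\u}$, $b_{\u,j_\u}$. Chaining the two equivalences — affine feedback with robust constraint satisfaction $\Leftrightarrow$ existence of $\Phi_\x,\Phi_\u,\Z,\V$ with the reachable sets inside the constraint sets $\Leftrightarrow$ existence of $\Phi_\x,\Phi_\u,\Z,\V$ satisfying~\eqref{eq:affine_map},~\eqref{eq:nom_pre},~\eqref{eq:cons_slc} — proves the proposition.

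I do not expect a genuine obstacle here: the result is a repackaging of Proposition~\ref{prop:map} together with elementary $\ell_\infty$/$\ell_1$ duality and additivity of support functions. The two places that warrant a careful sentence rather than a hand-wave are (i) justifying the invertibility of $\Phi_\x$ before invoking Proposition~\ref{prop:map}b) in the ``if'' direction, and (ii) making sure the quantifier bookkeeping is honest — namely that ``there exists an affine feedback, robust for all $\bm{\delta}$'' is matched on the right-hand side with an \emph{existential} over the decision variables $(\Phi_\x,\Phi_\u,\Z,\V)$, so that~\eqref{eq:cons_slc} is read as a feasibility condition and the two ``there exists'' sit in the same place on both sides of the equivalence.
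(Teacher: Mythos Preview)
Your proposal is correct and follows essentially the same approach as the paper: invoke Proposition~\ref{prop:map} to pass between affine feedbacks and system responses on the subspace~\eqref{eq:affine_map}, then turn the robust half-space constraints into the $1$-norm inequalities~\eqref{eq:cons_slc} via the support-function identity $\max_{\|v\|_\infty\le 1} c\,v=\|c\|_1$ applied to the independent terms of the reachable set. If anything, you are more careful than the paper, which compresses the argument into a single equality chain and does not spell out the ``if'' direction or the invertibility of $\Phi_\x$.
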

\begin{proof}
As per Proposition \ref{prop:map}, the state and input trajectory satisfy \eqref{eq:sys_rep}. Hence, for each $k=0,\dots,N-1$, $j = 1, \dots, n_\x$, we have%
\begin{equation}
\label{eq:proof_slc}
\begin{aligned}
\max_{\substack{\W \in {\mathcal{W}}^{N}\\ \Delta \X_0 \in {\mathcal{X}}_0}}A_{\x,j} \X_k 
&\stackrel{\eqref{eq:sys_rep}}{=} A_{\x,j} \Z_k + \max_{\Delta \X_0 \in {\mathcal{X}}_0} A_{\x,j} \Phi_{\x,0}^{k}\Delta \X_0 + \max_{\W \in {\mathcal{W}}^{N}}A_{\x,j}\Phi_{\x,\w}^{k}\W\\
&=A_{\x,j}\Z_k +\|A_{\x,j}\Phi_{\x,0}^{k}\Pz \|_1+\|A_{\x,j}\Phi_{\x,\w}^{k} \|_1 \le  b_{\x,j},
\end{aligned}
\end{equation}
where the second equality is given by the definition of the 1-norm. The same derivation applies to the input constraints. 
\end{proof}
As standard in \mbox{\ac{MPC}}, we ensure recursive feasibility by using a suitable terminal set~{\cite{Kouvaritakis2016ModelStochastic}}.
\begin{assumption}
\label{assum:terminal}
There exists a terminal set $\mathcal{X}_\f\defmath\{ x \in \mathbb{R}^n | A_\f x \le b_\f \}$ and a terminal feedback $K_\f\in\mathbb{R}^{m\times n}$ such that $(A+BK_\f) \mathcal{X}_\f \oplus B_\w \mathcal{W} \subseteq \mathcal{X}_\f$ and $(\mathcal{X}_\f\times K_\f\mathcal{X}_\f)\subseteq( \mathcal{X}\times \mathcal{U})$.
\end{assumption}
\subsection{{System level} predictive safety filter}
\label{sec:sl-mpc}
In this section, we present the SL-\ac{MPSF} using the parameterisation in Section \ref{sec:paramerisation}. 
In particular, at each time step $t$, for a given control input $u_\mathcal{L}(t)$ and a measured state $x(t)$, the SL-\ac{MPSF} is defined via the following optimisation problem:%
\begin{subequations}
\label{eq:sls}
    \begin{align}
    \min_{ \Phi,\Z,\V} \quad & \| \V_0 - u_\mathcal{L}(t)\|_2^2 ,\label{eq:cost} \\
    \text{s.t.}\quad   &\left[ \mathcal{I} - \mathcal{Z}\mathcal{A}\quad - \mathcal{Z}\mathcal{B}  \right]  \begin{bmatrix}
     \Phi_{\x}\\
     \Phi_\u\\
    \end{bmatrix} = \mathcal{E},~ x(t) = \Z_0,\label{eq:slp}\\
    & \Z_{k+1} = A\Z_k + B \V_k, ~k = 0,\dots, N-1,\label{eq:dyn}\\
    &  A_{\x,j} \Z_k +\|A_{\x,j}\Phi_{\x,\w}^{k} \|_1 \le b_{\x,j},~   j = 1,\dots, n_\x,~k = 0,\dots, N-1, \label{eq:cons1}\\
    & A_{\u,j} \V_k +\|A_{\u,j}\Phi_{\u,\w}^{k} \|_1 \le b_{\u,j},~ j = 1, \dots, n_\u, ~k = 0,\dots, N-1,\label{eq:cons2}\\
        &  A_{\f,j}\Z_N+\|A_{\f,j}\Phi_{\x,\w}^{N} \|_1 \le b_{\f,j},~ j = 1, \dots, n_\f.\label{eq:sls_terminal}
    \end{align}    
\end{subequations}
The constraints \eqref{eq:cons1}, \eqref{eq:cons2}, and \eqref{eq:sls_terminal} follow from Proposition~\ref{prop:cons}, where we consider the special case $\Pz = 0_{n,n}$ with $x(t)= \X_0 = \Z_0$. The terminal set constraint \eqref{eq:sls_terminal} ensures (robust) recursive feasibility.
The constraint \eqref{eq:slp} parameterises the tube-controller according to Proposition~\ref{prop:map}. The constraint \eqref{eq:dyn} defines the nominal dynamics used to compute a safe backup trajectory. The cost~\eqref{eq:cost} is chosen such that the applied input $u_\mathcal{L}(t)$ is modified as little as necessary, i.e., the solution of~\eqref{eq:sls} is $u_\mathcal{L}(t) = \V_0^\star$ whenever feasible.  Problem~\eqref{eq:sls} is a \ac{QP} and can hence be solved efficiently. 
We denote a minimiser of Problem~\eqref{eq:sls} by $\V^\star(x(t), u_\mathcal{L}(t))$.
In the closed-loop, we apply the first element of the optimal input sequence, i.e., the safety filter control law is given by%
\begin{equation}
    u_\mathcal{S}(x(t), u_\mathcal{L}(t),t) = \V_0^\star(x(t), u_\mathcal{L}(t)).
    \label{eq:safe_cont}
\end{equation}
We denote the set of feasible states $x(t)$ for Problem \eqref{eq:sls} by $\mathcal{X}_N \subseteq \mathcal{X}$.
\begin{theorem}
\label{theo:1}
Let Assumption~\ref{assum:terminal} hold. 
Then, the set $\mathcal{X}_N$ is a safe set according to Definition \ref{def:safe_set} for system \eqref{eq:lin} with the safety filter \eqref{eq:safe_cont}.
\end{theorem}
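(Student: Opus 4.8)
The plan is to prove the two standard properties of a model predictive safety filter — robust recursive feasibility and robust constraint satisfaction — and then invoke Definition~\ref{def:safe_set}. Let $x(\bar t)\in\mathcal{X}_N$, so Problem~\eqref{eq:sls} is feasible at time $\bar t$ with a minimiser $(\Phi^\star,\Z^\star,\V^\star)$; applying $u_\mathcal{S}(x(\bar t),u_\mathcal{L}(\bar t),\bar t)=\V_0^\star$, the true successor state is $x(\bar t+1)=A x(\bar t)+B\V_0^\star+B_\w w(\bar t)$ for some $w(\bar t)\in\mathcal{B}_\infty^n$. First I would show $x(\bar t+1)\in\mathcal{X}_N$ by constructing an explicit feasible candidate for Problem~\eqref{eq:sls} at time $\bar t+1$, which simultaneously establishes both the constraint-satisfaction step (since $x(\bar t+1)$ then satisfies $\Z_0\in\mathcal{X}$ via \eqref{eq:cons1} with $k=0$) and recursive feasibility. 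Induction over $t\geq\bar t$ then yields $x(t)\in\mathcal{X}_N\subseteq\mathcal{X}$ for all $t\geq\bar t$ and all disturbance realisations, which is exactly the property required in Definition~\ref{def:safe_set}.

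The main work is the candidate construction, done in the usual shift-and-append manner but respecting the \ac{SLS} block-lower-triangular structure. For the nominal trajectory I would take $\tilde\Z_k\defmath\Z_{k+1}^\star+\Delta\X_{k+1}$ and $\tilde\V_k\defmath\V_{k+1}^\star+\Delta\U_{k+1}$ for $k=0,\dots,N-2$, where $\Delta\X$, $\Delta\U$ are the closed-loop error responses to the realised disturbance $w(\bar t)$ computed via \eqref{eq:sys_rep} with $\Delta\X_0=0$, $\W=(w(\bar t),0,\dots,0)$; concretely $\Delta\X_{k+1}=\Phi_{\x,\w}^{\star,k+1}(w(\bar t),0,\dots,0)$ and similarly for $\Delta\U$. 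In particular $\tilde\Z_0=\Z_1^\star+\Phi_{\x,\w}^{\star,1}(w(\bar t),0,\dots,0)=A x(\bar t)+B\V_0^\star+B_\w w(\bar t)=x(\bar t+1)$, so the initial-condition constraint in \eqref{eq:slp}--\eqref{eq:dyn} holds. For the terminal piece I would use Assumption~\ref{assum:terminal}: set $\tilde\V_{N-1}\defmath K_\f\tilde\Z_{N-1}$ (equivalently append the terminal controller), giving $\tilde\Z_N=(A+BK_\f)\tilde\Z_{N-1}$. For the system responses $\tilde\Phi$ I would shift the block rows of $\Phi^\star$ up by one and append, in the last block row, the terminal feedback responses — the responses of the closed loop under $K_\f$ restricted to the one-step-ahead disturbance window — so that $\tilde\Phi$ remains in $\mathcal{L}^{N,\cdot\times\cdot}$ and satisfies the affine constraint \eqref{eq:slp}/\eqref{eq:affine_map} with the shifted dynamics; this is the step that needs the terminal invariance hypothesis to close up the recursion at block row $N$.

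It then remains to verify that this candidate satisfies \eqref{eq:cons1}, \eqref{eq:cons2}, \eqref{eq:sls_terminal}. For $k=0,\dots,N-2$ the tightened constraints hold because, by Proposition~\ref{prop:cons} and \eqref{eq:sys_rep}, the reachable set of the shifted-and-disturbance-perturbed trajectory at stage $k$ is contained in the reachable set of the original trajectory at stage $k+1$ (the realised disturbance $w(\bar t)$ is one admissible element of $\mathcal{B}_\infty^n$, and the remaining future disturbances are exactly the same box), which satisfied the corresponding constraint in the original problem; making this set-containment precise at the level of the $1$-norm terms $\|A_{\x,j}\tilde\Phi_{\x,\w}^{k}\|_1$ is the routine but slightly delicate computation. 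For the terminal stages $k=N-1$ and the terminal set constraint on $\tilde\Z_N$, the inclusion $(A+BK_\f)\mathcal{X}_\f\oplus B_\w\mathcal{W}\subseteq\mathcal{X}_\f$ and $(\mathcal{X}_\f\times K_\f\mathcal{X}_\f)\subseteq(\mathcal{X}\times\mathcal{U})$ from Assumption~\ref{assum:terminal}, together with $\tilde\Z_{N-1}\in\mathcal{R}_\x(\Z_N^\star,\Phi_{\x}^{\star,N})\subseteq\mathcal{X}_\f$ from \eqref{eq:sls_terminal}, give the required bounds after propagating the disturbance tube one more step through $A+BK_\f$. I expect the bookkeeping of the block-triangular response matrices under the shift — verifying that the appended last block row indeed yields a valid element of the affine subspace \eqref{eq:affine_map} and that its $\Phi_{\x,\w}$ part is exactly the disturbance-to-error map of the terminal closed loop — to be the main obstacle; everything else is a direct translation of the classical robust-MPC recursive-feasibility argument into the \ac{SLS} parameterisation.
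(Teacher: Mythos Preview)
Your proposal is correct and follows essentially the same approach the paper sketches: establish robust recursive feasibility of Problem~\eqref{eq:sls} via a shift-and-append candidate that uses the terminal controller $K_\f$ from Assumption~\ref{assum:terminal}, then conclude $\mathcal{X}_N$ is \ac{RPI} and hence a safe set. The only difference is one of presentation rather than substance: the paper does not carry out the block-triangular bookkeeping you outline, but instead invokes the equivalence of the \ac{SLS} parameterisation with the disturbance-feedback formulation (cf.~\cite[Theorem~2]{Sieber2021AControl}) and points to the recursive-feasibility proof already available in~\cite{Goulart2006OptimizationConstraints}, noting finally that the argument is independent of the cost and hence of $u_\mathcal{L}(t)$.
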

\begin{proof}
Suppose Problem~\eqref{eq:sls} is feasible for some state $x(t)\in\mathcal{X}_N$.
Then, for any $u_\mathcal{L}(t)\in \mathbb{R}^m$ and any $w(t)\in \mathcal{W}$, one can show that Problem~\eqref{eq:sls} is feasible for $x(t+1)$. Hence, the feasible set $\mathcal{X}_N$ is \ac{RPI}, i.e., $x(t+1) = Ax(t) + B\V_0^\star(x(t), u_\mathcal{L}(t)) + w(t) \in \mathcal{X}_N$. 
Following standard \ac{MPC} arguments, this can be shown by constructing a feasible candidate solution that shifts the previously optimal solution and appends the terminal control law $K_\f$ from Assumption~\ref{assum:terminal}. 
In fact, this proof can be found in~\cite{Goulart2006OptimizationConstraints}, where the result was shown based on the equivalent disturbance feedback formulation (cf.~\cite[Theorem 2]{Sieber2021AControl}). Notably, this recursive feasibility property is completely independent of the considered cost function and hence the input $u_{\mathcal{L}}(t)$.
\end{proof}
Compared to standard \ac{RMPC} formulations~\cite{Mayne2005RobustDisturbances,Chisci2001SystemsConstraints}, the proposed SL-\ac{MPSF} formulation~\eqref{eq:sls} reduces the conservativeness by optimising over affine feedbacks and using a tight reachability analysis (Proposition~\ref{prop:cons}). As a result, the SL-\ac{MPSF} scheme is less conservative than state-of-the-art \ac{MPSF} schemes, e.g.~\cite{Wabersich2018LinearControl}. 
As online optimisation can be computationally expensive, the next section introduces an alternative approach to synthesise a safe set and safety filter, which only requires the solution to a single LP offline. This drastically reduces the online computation complexity, albeit at the expense of potentially more frequent and stronger control interventions of the safety filter.

\section{Explicit system level safe set}
\label{sec:explicit}
The safety filter described in Section \ref{sec:implicit_safe_set} requires the solution of a \ac{QP} at each time step. {To eliminate} the need for an embedded \ac{QP}-solver, we present an explicit safe set that can be generated through the solution of a single LP, solved offline. The optimisation problem solved offline relies on the parameterisation introduced in Section \ref{sec:paramerisation}, and hence is comparable to Problem~\eqref{eq:sls}.

The key idea is that we maximise the size of the set of initial conditions $\mathcal{S}_\mathrm{e}$, such that the system is guaranteed to return to that same set $\mathcal{S}_\mathrm{e}$ within a given number of time steps, while always satisfying the constraints. Consequently, besides the constraint satisfaction given by Propositions~\ref{prop:map},~\ref{prop:cons}, this is realised by adding the following two constraints%
\begin{equation}
    \X_0 \in \mathcal{S}_\mathrm{e},~ \mathcal{R}_\x(\Z_N,\Phi_\x^N) \subseteq\mathcal{S}_\mathrm{e}.
    \label{eq:uncer_ic}
\end{equation} 
Previous approaches to design an (explicit) safe sets (cf., e.g.~\cite{Wabersich2018ScalableControl}), are based on a (robust) positively invariant set and offline controller design.%
In contrast, using system level disturbance reachable sets, the proposed safe set {does not require any controller tuning and} satisfies a weaker periodic invariance condition~\cite{GONDHALEKAR2011326}, and hence allows us to optimise over a simple shape of the safe set, e.g. unit infinity-norm balls.
In the following, we consider for simplicity the safe set $\mathcal{S}_\mathrm{e}\defmath  \{\Z_0\}\oplus \alpha \mathcal{B}_\infty^n$, with $\alpha > 0$, but hyperboxes, i.e., $\mathcal{S}_\mathrm{e}=\{\Z_0\}\oplus\text{diag}(\bm{\alpha})\mathcal{B}_\infty^{n}$, $\text{diag}(\bm{\alpha})\in \mathbb{R}^{n\times n}$ with $\bm{\alpha}\in\mathbb{R}^n$ could be considered directly. 
Note that Inequalities~\eqref{eq:cons_slc} in Proposition \ref{prop:cons} have a non-convex bilinearity if we optimise over the matrix $\Pz$. Hence, we use the change of variables $\tilde \Phi_\x^{k,k} \defmath \Phi_\x^{k,k}\Pz$, and $\tilde \Phi_\u^{k,k} \defmath \Phi_\u^{k,k}\Pz$ in \eqref{eq:cons_slc} for lossless convexification.
Due to the change of variables, the matrix $\mathcal{E}$ in \eqref{eq:affine_map} is replaced by $\tilde{\mathcal{E}}(\alpha) \defmath \text{diag}(\alpha \mathcal{I}, B_\w, \dots, B_\w)\in \mathcal{L}^{N, n\times n}$.

The following optimisation problem yields the proposed explicit safe set with an explicit safe backup control law:%
\begin{subequations}\protect
\label{eq:explicit_set}
    \begin{align}
    \min_{ \Phi,\Z,\V, \alpha} \quad &-\alpha,&
    \text{s.t.}\quad   &\left[ \mathcal{I} - \mathcal{Z}\mathcal{A}\quad - \mathcal{Z}\mathcal{B}  \right]  \begin{bmatrix}
     {\Phi}_{\x}\\
     \Phi_\u\\
    \end{bmatrix} = \tilde{\mathcal{E}}(\alpha),\label{eq:slp_expl}\\
    &&& \eqref{eq:slp},~\eqref{eq:cons1},~\eqref{eq:cons2}, \label{eq:cons_expl}\\
    &&&|\mathcal{I}_{n,j} (\Z_N-\Z_0)|+\| \mathcal{I}_{n,j}\Phi_{\x}^{N} \|_1 \le \alpha,~j =1,\dots, n,\label{eq:final_set_expl}
    \end{align}
\end{subequations}
where $\mathcal{I}_{n,j}$ is the $j^\text{th}$ row of $\mathcal{I}_n$. The constraints \eqref{eq:cons_expl} and \eqref{eq:final_set_expl} are similar from Problem~\eqref{eq:sls_terminal}.
We denote the optimal solution of the optimisation problem \eqref{eq:explicit_set} as $\Phi^\star$, $\Z^\star$, $\V^\star$ and $\alpha^\star$. The explicit safe set is given by $\mathcal{S}_\mathrm{e}^\star \defmath \{\Z_0^\star\}\oplus\alpha^\star \mathcal{B}_\infty^n$ and $\mathcal{K}^\star = (\Phi_\x^\star)^{-1} \Phi_\u^\star$ characterises a safe affine backup controller, similar to~\eqref{eq:feedback}\footnote{The constraint \eqref{eq:final_set_expl} guarantees that $\alpha^\star >0$ for $B_\w$ invertible and hence $\tilde{\mathcal{E}}(\alpha)$ is invertible.}. The objective function {of~\mbox{\eqref{eq:explicit_set}}} maximises the volume of the safe set. The parameterisation of the controller yields the constraints~\eqref{eq:slp_expl} as per Proposition~\ref{prop:map}. 
Note that the nominal trajectory $(\Z,\V)$ could also be chosen as zeros in \eqref{eq:explicit_set}, however, especially for asymmetrical constraints, this may introduce conservativeness.
The constraints \eqref{eq:final_set_expl} guarantee that the system will robustly return to the safe set within $N$ time steps under the safe backup controller (cf.~\eqref{eq:uncer_ic}). 
The constraints \eqref{eq:cons_expl} guarantee robust constraint satisfaction for any realisation of the disturbance and for any initial conditions given by \eqref{eq:uncer_ic}.
{Algorithm~\mbox{\ref{algo:1}} ensures safety by verifying that the learned-input \mbox{$u_\mathcal{L}(t)$} keeps the system within the safe set \mbox{$\mathcal{S}_\mathrm{e}^\star$}, which is computed using~\mbox{\eqref{eq:explicit_set}}, before applying it.}
\begin{algorithm}
\caption{Explicit system level safety filter}
Compute $\mathcal{S}_\mathrm{e}^\star$, $\mathcal{K}^\star$, $\Z^\star$, and $\V^\star$ using Problem~\eqref{eq:explicit_set}, Initialise $x(0) \in \mathcal{S}_\mathrm{e}^\star$\\
\For{$t = 0,1,2,\dots$}{
    \eIf{%
    \begin{equation}\label{eq:safe_control_law}
      u_\mathcal{L}(t) \in \mathcal{U}\land \{Ax(t)+Bu_\mathcal{L}(t)\}\oplus B_\w \mathcal{W} \subseteq \mathcal{S}_\mathrm{e}^\star
    \end{equation}}{
        Apply $u_{\mathcal{S}_\mathrm{e}}(t) = u_\mathcal{L}(t)$, $j \gets 0$
    }{
        Apply 
        \begin{equation}\label{eq:feedback_explicit}
            u_{\mathcal{S}_\mathrm{e}}(t) = \V_j^\star + \sum_{i=0}^j \mathcal{K}^{\star j,i} (x(t-i:t) - \Z_{i}^\star),~j \gets \text{modulo}(j+1,N)
        \end{equation} % guaranteed that the safe set is reached
    }
}
\label{algo:1}
\end{algorithm}
We note that the condition \eqref{eq:safe_control_law} is a set containment condition, evaluated by only a few arithmetic operations thanks to the simple shape of the set $\mathcal{S}_\mathrm{e}^\star$.
\begin{theorem}
The control law resulting from Algorithm~\ref{algo:1} is a safety filter and the set $\mathcal{S}_\mathrm{e}^\star$ is the corresponding safe set according to Definition~\ref{def:safe_set}.
\end{theorem}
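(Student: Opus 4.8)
The plan is to show that Algorithm~\ref{algo:1} keeps $x(t)\in\mathcal{X}$ and the applied input in $\mathcal{U}$ for all $t\ge\bar t$ whenever $x(\bar t)\in\mathcal{S}_\mathrm{e}^\star$ (with the internal counter initialised to $j=0$), by combining the reachability certificates contained in Problem~\eqref{eq:explicit_set} with an induction over the switching logic. First I would record what a minimiser $\Phi^\star,\Z^\star,\V^\star,\alpha^\star$ of~\eqref{eq:explicit_set} provides. By Proposition~\ref{prop:map}(b), constraint~\eqref{eq:slp_expl} implies that $\mathcal{K}^\star=(\Phi_\x^\star)^{-1}\Phi_\u^\star$ realises the responses $\Phi^\star$ for the error dynamics~\eqref{eq:LTV},~\eqref{eq:feedback} with initial set $\mathcal{X}_0=\alpha^\star\mathcal{B}_\infty^n$ (here $\alpha^\star>0$, so $\tilde{\mathcal{E}}(\alpha^\star)$ is invertible as noted in the footnote). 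Reading the constraints~\eqref{eq:cons_expl} through Proposition~\ref{prop:cons} (with the change of variables), the disturbance reachable sets obey $\mathcal{R}_\x(\Z_k^\star,\Phi_\x^{\star k})\subseteq\mathcal{X}$ and $\mathcal{R}_\u(\V_k^\star,\Phi_\u^{\star k})\subseteq\mathcal{U}$ for $k=0,\dots,N-1$; since the $(0,0)$-block of $\Phi_\x^\star$ equals $\alpha^\star\mathcal{I}$ by~\eqref{eq:slp_expl}, the case $k=0$ gives $\mathcal{S}_\mathrm{e}^\star=\mathcal{R}_\x(\Z_0^\star,\Phi_\x^{\star 0})\subseteq\mathcal{X}$. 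Finally, \eqref{eq:final_set_expl} is precisely the inclusion $\mathcal{R}_\x(\Z_N^\star,\Phi_\x^{\star N})\subseteq\mathcal{S}_\mathrm{e}^\star$ of~\eqref{eq:uncer_ic}.

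Next I would prove the backup property: if $x(t_0)\in\mathcal{S}_\mathrm{e}^\star$ and the backup law~\eqref{eq:feedback_explicit} is applied at $t_0,t_0+1,\dots$ with phases $0,1,2,\dots$, then for every disturbance realisation $x(t_0+k)\in\mathcal{R}_\x(\Z_k^\star,\Phi_\x^{\star k})$ and the input applied at $t_0+k$ lies in $\mathcal{R}_\u(\V_k^\star,\Phi_\u^{\star k})$ for $k=0,\dots,N-1$, and $x(t_0+N)\in\mathcal{S}_\mathrm{e}^\star$. The argument is that~\eqref{eq:feedback_explicit} is the feedback~\eqref{eq:feedback} with gain $\mathcal{K}^\star$ evaluated along the realised trajectory: writing $\Delta\X_k\defmath x(t_0+k)-\Z_k^\star$ and using $\Z_{k+1}^\star=A\Z_k^\star+B\V_k^\star$, the closed loop satisfies~\eqref{eq:LTV},~\eqref{eq:feedback} with $\Delta\X_0=x(t_0)-\Z_0^\star\in\alpha^\star\mathcal{B}_\infty^n$ and disturbances $w(t_0),w(t_0+1),\dots$, so by Proposition~\ref{prop:map}(a) and block-lower-triangularity $x(t_0+k)\in\mathcal{R}_\x(\Z_k^\star,\Phi_\x^{\star k})$ and the applied input lies in $\mathcal{R}_\u(\V_k^\star,\Phi_\u^{\star k})$; the claimed inclusions then follow from the first paragraph, and $x(t_0+N)\in\mathcal{S}_\mathrm{e}^\star$ from~\eqref{eq:final_set_expl}.

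The main step is an induction over $t\ge\bar t$ on the invariant: at the start of iteration $t$, either $j=0$ and $x(t)\in\mathcal{S}_\mathrm{e}^\star$, or $j=k\in\{1,\dots,N-1\}$ and the second branch was taken at each of $t-k,\dots,t-1$ with phases $0,\dots,k-1$, the backup having started at $t-k$ from a state $x(t-k)\in\mathcal{S}_\mathrm{e}^\star$ (this last case is consistent because the counter only reaches $k\ge1$ by incrementing from $k-1$, never via reset or the modulo wrap-around $N-1\mapsto0$). The base case is the hypothesis. For the step, if the guard~\eqref{eq:safe_control_law} holds we apply $u_\mathcal{L}(t)\in\mathcal{U}$ and obtain $x(t+1)\in\mathcal{S}_\mathrm{e}^\star$, while $x(t)\in\mathcal{X}$ by the invariant together with the backup property, and the counter resets to $j=0$, re-establishing the first case; if the guard fails we apply~\eqref{eq:feedback_explicit}, and the backup property gives the applied input in $\mathcal{R}_\u(\V_j^\star,\Phi_\u^{\star j})\subseteq\mathcal{U}$ and $x(t)\in\mathcal{R}_\x(\Z_j^\star,\Phi_\x^{\star j})\subseteq\mathcal{X}$, while the counter becomes $j+1\le N-1$ (propagating the second case) or wraps to $0$ with $x(t+1)\in\mathcal{R}_\x(\Z_N^\star,\Phi_\x^{\star N})\subseteq\mathcal{S}_\mathrm{e}^\star$ (re-establishing the first case). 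Hence $x(t)\in\mathcal{X}$ and $u_{\mathcal{S}_\mathrm{e}}(t)\in\mathcal{U}$ for all $t\ge\bar t$ and all disturbance sequences in $\mathcal{B}_\infty^n$, which is exactly the property required by Definition~\ref{def:safe_set}.

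I expect the main obstacle to be the careful bookkeeping of the switching: establishing that whenever the counter reads $j\ge1$ the recent state history is genuinely consistent with a single SLS closed-loop trajectory started inside $\mathcal{S}_\mathrm{e}^\star$, so that Proposition~\ref{prop:map} applies and the reachable-set inclusions of~\eqref{eq:explicit_set} carry over; a minor conceptual point is reconciling the counter-augmented law of Algorithm~\ref{algo:1} with the memoryless form $u_\mathcal{S}(x(t),u_\mathcal{L}(t),t)$ of Definition~\ref{def:safe_set}, which is cleanest if $j$ is read as an internal state and the claim is stated for initialisations with $j=0$ and $x(\bar t)\in\mathcal{S}_\mathrm{e}^\star$. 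The genuinely new ingredient relative to Theorem~\ref{theo:1} is the periodic (rather than one-step) return guarantee~\eqref{eq:final_set_expl}, but once the bookkeeping is in place this enters only through the reachable-set inclusion and requires no further work.
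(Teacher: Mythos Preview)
Your proposal is correct and follows essentially the same approach as the paper: invoke Propositions~\ref{prop:map} and~\ref{prop:cons} to certify that the backup law keeps $x(\bar t+k)\in\mathcal{X}$, $u(\bar t+k)\in\mathcal{U}$ for $k=0,\dots,N-1$ whenever it starts from $\mathcal{S}_\mathrm{e}^\star$, use~\eqref{eq:final_set_expl} for the periodic return $x(\bar t+N)\in\mathcal{S}_\mathrm{e}^\star$, and note that the safe branch~\eqref{eq:safe_control_law} trivially keeps the state in $\mathcal{S}_\mathrm{e}^\star\subseteq\mathcal{X}$. Your treatment is in fact more careful than the paper's, which handles the switching logic informally (``denote the last time the safety condition was fulfilled by $\bar t$'') without the explicit counter invariant or the observation that $\mathcal{S}_\mathrm{e}^\star\subseteq\mathcal{X}$ follows from the $k=0$ constraint.
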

\begin{proof}
First, if the input $u_\mathcal{L}(t)$ is safe, i.e., $u_\mathcal{L}(t) \in \mathcal{U}\land \{Ax(t)+Bu_\mathcal{L}(t)\}\oplus B_\w \mathcal{W} \subseteq \mathcal{S}_\mathrm{e}^\star$, then, the state is guaranteed to stay within the set $\mathcal{S}_\mathrm{e}^\star$, by definition, and hence within the state constraints. Hence, the safe backup controller can always be applied subsequently to the application of a learned input. 

We now look at the case of a (potentially) unsafe input $u_\mathcal{L}(t)$. 
We denote the last time that the safety condition in \eqref{eq:safe_control_law} was fulfilled by $\bar{t}\in\mathbb{N}_{\geq 0}$. The safe backup controller constructed in \eqref{eq:explicit_set} is a direct application of Propositions \ref{prop:map} and \ref{prop:cons}. 
Hence, starting at $x(\bar t)\in \mathcal{S}_\mathrm{e}^\star$, it results that by applying the affine feedback \eqref{eq:feedback_explicit} for $N$ time steps to the system~\eqref{eq:lin}, the constraints
\begin{equation*}
    x(\bar t+k) \in \mathcal{X},~u(\bar t+k) \in \mathcal{U},~k = 0,\dots, N-1,
\end{equation*}
are robustly satisfied for any realisation of $\bm \delta \in \mathcal{X}_0 \times{\mathcal{W}}^{N}$, i.e., for any $x(\bar t) \in \alpha^\star \mathcal{B}_\infty^n$ and any sequence {\mbox{$(w(\bar t), \ldots, w(\bar t +N-1))\in \mathcal{W}^N$}}. Along the same lines, the constraint \eqref{eq:final_set_expl} guarantees that the system returns to the safe set within $N$ time steps, i.e., $x(\bar t+N) \in \mathcal{S}_\mathrm{e}^\star$. Hence the same backup controller is still feasible by re-setting $\bar t=t$.

It results that the backup controller associated to the safe set $\mathcal{S}_\mathrm{e}^\star$ is explicitly described by \eqref{eq:feedback_explicit}, and Algorithm \ref{algo:1} implements the associated safety filter as per Definition~\ref{def:safe_set}.
\end{proof}%
\begin{remark}
Based on the solution of \eqref{eq:explicit_set}, we could also construct an explicit \ac{RCI} set $\mathcal{C}^\star$ as%
\begin{equation}
    \mathcal{C}^\star \defmath \mathrm{Conv}\left(\bigcup_{k=0}^{N-1} \{\Z_k^\star\}\oplus \Phi_{\x}^{\star k}\mathcal{B}_\infty^{N n} \right),
\end{equation}
where $\mathrm{Conv}$ denotes the convex hull. The set $\mathcal{C}^\star$ defines a safe set as per Definition~\ref{def:safe_set}. Similar to \acp{CBF}~\cite{Cheng2019End-to-EndTasks}, the safety filter is implicitly defined by solving the following \ac{QP}%
\begin{subequations}
\label{eq:explicit_set_v2}
    \begin{equation}
    \min_{u\in\mathcal{U}} \|u - u_\mathcal{L}(t)\|_2^2,\quad \textrm{s.t.}~ Ax(t) + Bu + B_\w w \in \mathcal{C}^\star  ~\forall w\in \mathcal{B}_\infty^n.
    \end{equation}
\end{subequations}
\end{remark}
\section{Numerical example}
\label{sec:num}
In this section, we demonstrate the benefits of the proposed system level safety filters compared to state-of-the-art \ac{MPSF} formulations{ specifically highlighting improvements in the size of the safe set and the maximum amount of control intervention}.
We consider the illustrative example of controlling a double integrator system which relates to many practical problems, e.g. a spacecraft in rotation around one of its axis \cite{Mammarella2018Tube-basedDisturbance}. 
The dynamics are given by $
x(t+1) = \begin{bmatrix}
1&1\\0&1
\end{bmatrix}x(t) + \begin{bmatrix}
0.5\\
1
\end{bmatrix}u(t) + \begin{bmatrix}
0.3&0\\
0&0.3\\
\end{bmatrix} w(k),
$
with $|u(t)|\le 3$, $\|x(t)\|_\infty \le 5$, $\|w(t)\|_\infty \le 1$, $t\in\mathbb{N}_{\geq 0}$. Consider the linear feedback $K_\f$ computed via LQR design, with $Q=\mathcal{I}_2$, and $R=10^2$.
For this system and a prediction horizon of $N=10$, we compare the \ac{MPSF} safe set \mbox{$\mathcal{S}_\textrm{MPSF}$}~\cite{Wabersich2018LinearControl}, the maximal \mbox{\ac{RPI}} set $\Omega_\textrm{max}$, the SL-\ac{MPSF} safe set $\mathcal{S}_\textrm{SL-MPSF}$ (Section \ref{sec:sl-mpc}), the SL-based explicit safe set $\mathcal{S_\mathrm{e}}$ (Section \ref{sec:explicit}){ and the maximum {\ac{RCI}} set $\Xi_\textrm{max}$. Although $\Xi_\textrm{max}$ is not practically relevant because of its poor scalability, it serves as a basis for comparison as it is the maximum theoretically possible safe set}.
Using the MPT3 toolbox~\cite{mpt}, we find approximations of a minimal \mbox{\ac{RPI}} set {$\Omega_\textrm{min}$, maximal \mbox{\ac{RPI}} set $\Omega_\textrm{max}$, and maximal positively invariant set $\Pi_\textrm{max}$} based on the closed-loop dynamics $A+BK_\f$.
{We use the terminal set $ \mathcal{X}_\f=\Omega_\textrm{max}$ for the proposed SL-{\ac{MPSF}}.} For the \ac{MPSF} approach, we {use $\Omega_\textrm{min}$ for constraint tightening} and {$\Pi_\textrm{max}$ as the terminal set} (see e.g.~\cite{Mayne2005RobustDisturbances} for further details).

In Figure~{\ref{fig:fig1}}(right), the safe sets for each method are plotted and their size is compared against the largest possible safe set \mbox{$\Xi_\textrm{max}$}, with larger sets being preferable. The SL-{\ac{MPSF}}'s safe set is nearly maximal and significantly larger than the one associated with the {\ac{MPSF}}, demonstrating the effectiveness of our proposed method.
The explicit safe set covers regions not covered by either \mbox{$\Omega_\textrm{max}$} or \mbox{$\mathcal{S}_\textrm{MPSF}$}.
However, in general, the maximal {\ac{RPI}} set is not contained within the explicit safe set because of the simple scalar parameterisation of the safe set.

In Figure~\ref{fig:fig1}(left-middle), we show the maximum intervention for any input $u_\mathcal{L}(t)\in\mathcal{U}$ for each point in the safe set given by SL-\ac{MPSF} and \ac{MPSF}{, where lower values are preferable}.%
\begin{figure}[!htbp]
    \centering
    \includegraphics[width=\textwidth]{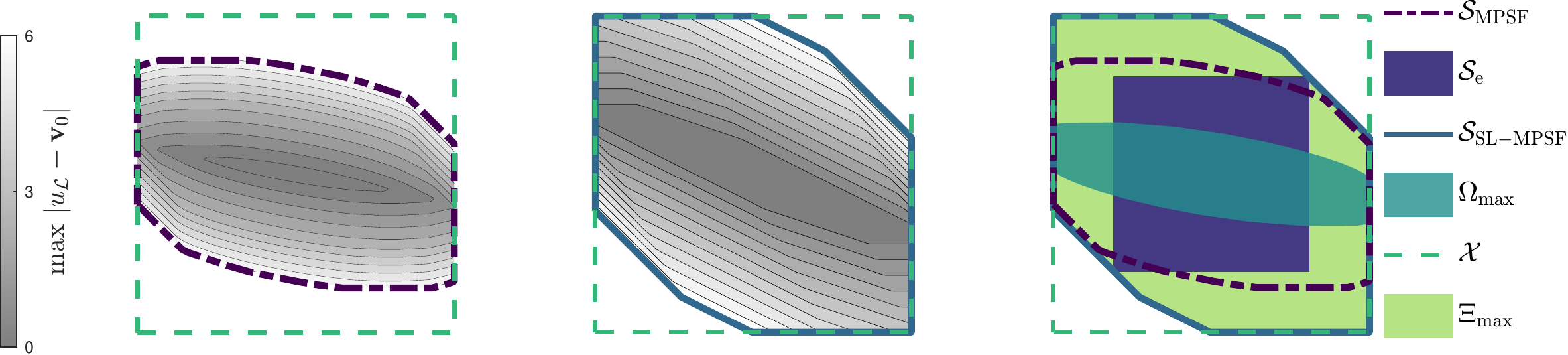}
    \caption{Comparison of the maximal control intervention for \ac{MPSF} (left) and SL-\ac{MPSF} (middle). Size comparison for safe set from \ac{MPSF} $\mathcal{S}_\textrm{MPSF}$, explicit $\mathcal{S}_\textrm{e}$, SL-MPSF $\mathcal{S}_\textrm{SL-MPSF}$, maximal \ac{RPI} set $\Omega_\textrm{max}$, maximal RCI set $\Xi_\textrm{max}$, and constraint set $\mathcal{X}$ (right).}
    \label{fig:fig1}
\end{figure}

The results highlight that SL-\ac{MPSF} is not only able to ensure safety for a larger set of states, but it also requires a less aggressive modification of potentially unsafe inputs. {This allows the learning-based controller to maintain high performance with minimal interference.} SL-\ac{MPSF} is hence the less conservative filter to explore the state space with minimal control interventions when the maximal \ac{RCI} cannot be computed.%
\paragraph{Computation times}
We uniformly sample \mbox{$10^4$} initial conditions in \mbox{$\mathcal{X}$}. The {\ac{MPSF}} solve time is \mbox{$4.8\textrm{ms}\pm 1.8\textrm{ms}$}, while SL-{\ac{MPSF}} solve time is \mbox{$65.7\textrm{ms} \pm 20.0\textrm{ms}$}, i.e., a factor $\approx 10$ slower. In contrast, the evaluation of the explicit safety filter (see Eq.~\mbox{\eqref{eq:safe_control_law}}) takes \mbox{$0.031\textrm{ms} \pm 0.028 \textrm{ms}$}, i.e., a factor $\approx 10^3$ faster compared to SL-{\ac{MPSF}}. However, this code has not been optimised for speed.\footnote{The code is available online \url{https://gitlab.ethz.ch/ics/SLS_safety_filter}. We use \texttt{YALMIP}~\mbox{{\cite{Lofberg2004}}} and \texttt{MOSEK}~\mbox{{\cite{mosek}}} on a  Intel\textsuperscript{\textregistered} Core\textsuperscript{TM} i7-8565U CPU @1.80GHz with 16.0GB of RAM.}

    %                mean          std    
    %             __________    __________
    % Explicit    3.1272e-05    2.8489e-05
    % SL_MPSF       0.065667      0.019963
    % MPSF         0.0048129     0.0017681
    
\section{Conclusion}
The paper has presented new predictive safety filters formulation based on {\ac{SLS}} techniques, to augment any control policy (including the important case of learning-based controllers) with safety guarantees. 
Two distinct safety filters have been proposed leveraging the flexibility of {\ac{SLS}}.
The first method extends {\ac{MPSF}} by enlarging the safe set through online optimisation over the robustness-ensuring tube-controller.
This approach allows to verify safety for a larger set of states with less control intervention, thereby reducing the impact on the performance of the potentially unsafe controller.
The second method presents an efficient explicit safety filter formulation that does not require solving optimisation problems online or tuning a controller offline. 
We expect that the proposed approaches can be extended to the nonlinear setting using the tools developped in~\mbox{\cite{leeman2023robust}}.
\bibliography{references}

\begin{thebibliography}{30}
\providecommand{\natexlab}[1]{#1}
\providecommand{\url}[1]{\texttt{#1}}
\expandafter\ifx\csname urlstyle\endcsname\relax
  \providecommand{\doi}[1]{doi: #1}\else
  \providecommand{\doi}{doi: \begingroup \urlstyle{rm}\Url}\fi

\bibitem[Ames et~al.(2019)Ames, Coogan, Egerstedt, Notomista, Sreenath, and
  Tabuada]{Ames2019}
Aaron~D. Ames, Samuel Coogan, Magnus Egerstedt, Gennaro Notomista, Koushil
  Sreenath, and Paulo Tabuada.
\newblock {Control Barrier Functions: Theory and Applications}.
\newblock In \emph{Proc. 18th European Control Conference (ECC)}, pages
  3420--3431, 2019.
\newblock \doi{10.23919/ECC.2019.8796030}.

\bibitem[Anderson et~al.(2019)Anderson, Doyle, Low, and Matni]{Anderson2019}
James Anderson, John~C. Doyle, Steven~H. Low, and Nikolai Matni.
\newblock {System level synthesis}.
\newblock \emph{Annual Reviews in Control}, 47:\penalty0 364--393, 2019.
\newblock ISSN 13675788.
\newblock \doi{10.1016/j.arcontrol.2019.03.006}.

\bibitem[ApS(2021)]{mosek}
Mosek ApS.
\newblock \emph{The MOSEK optimization software}, 2021.
\newblock URL \url{https://www.mosek.com}.

\bibitem[Chen et~al.(2021)Chen, Matni, Morari, and
  Preciado]{Chen2021SystemApproximation}
Shaoru Chen, Nikolai Matni, Manfred Morari, and Victor~M. Preciado.
\newblock {System Level Synthesis-based Robust Model Predictive Control through
  Convex Inner Approximation}.
\newblock \emph{arXiv preprint arXiv:2111.05509}, 2021.

\bibitem[Cheng et~al.(2019)Cheng, Orosz, Murray, and
  Burdick]{Cheng2019End-to-EndTasks}
Richard Cheng, Gábor Orosz, Richard~M. Murray, and Joel~W. Burdick.
\newblock {End-to-End Safe Reinforcement Learning through Barrier Functions for
  Safety-Critical Continuous Control Tasks}.
\newblock In \emph{AAAI Conference on Artificial Intelligence}, volume~33,
  pages 3387--3395, 2019.
\newblock \doi{10.1609/aaai.v33i01.33013387}.

\bibitem[Chisci et~al.(2001)Chisci, Rossiter, and
  Zappa]{Chisci2001SystemsConstraints}
Luigi Chisci, J.~Anthony Rossiter, and Giovanni Zappa.
\newblock {Systems with persistent disturbances: Predictive control with
  restricted constraints}.
\newblock \emph{Automatica}, 37\penalty0 (7):\penalty0 1019--1028, 2001.
\newblock ISSN 00051098.
\newblock \doi{10.1016/S0005-1098(01)00051-6}.

\bibitem[Dai et~al.(2021)Dai, Landry, Yang, Pavone, and
  Tedrake]{dai2021lyapunov}
Hongkai Dai, Benoit Landry, Lujie Yang, Marco Pavone, and Russ Tedrake.
\newblock Lyapunov-stable neural-network control.
\newblock \emph{arXiv preprint arXiv:2109.14152}, 2021.

\bibitem[Didier et~al.(2021)Didier, Wabersich, and
  Zeilinger]{Didier2021AdaptiveControl}
Alexandre Didier, Kim~P. Wabersich, and Melanie~N. Zeilinger.
\newblock {Adaptive Model Predictive Safety Certification for Learning-based
  Control}.
\newblock In \emph{Proc. 60th IEEE Conference on Decision and Control (CDC)},
  pages 809--815. IEEE, 2021.
\newblock ISBN 9781665436595.
\newblock \doi{10.1109/CDC45484.2021.9682832}.

\bibitem[Fisac et~al.(2019)Fisac, Akametalu, Zeilinger, Kaynama, Gillula, and
  Tomlin]{Fisac2019ASystems}
Jaime~F. Fisac, Anayo~K. Akametalu, Melanie~N. Zeilinger, Shahab Kaynama,
  Jeremy Gillula, and Claire~J. Tomlin.
\newblock {A General Safety Framework for Learning-Based Control in Uncertain
  Robotic Systems}.
\newblock \emph{IEEE Transactions on Automatic Control}, 64\penalty0
  (7):\penalty0 2737--2752, 2019.
\newblock ISSN 15582523.
\newblock \doi{10.1109/TAC.2018.2876389}.

\bibitem[Gillula and Tomlin(2012)]{Gillula2012}
Jeremy~H. Gillula and Claire~J. Tomlin.
\newblock {Guaranteed Safe Online Learning via Reachability: tracking a ground
  target using a quadrotor}.
\newblock In \emph{Proc. IEEE International Conference on Robotics and
  Automation (ICRA)}, pages 2723--2730, 2012.
\newblock \doi{10.1109/ICRA.2012.6225136}.

\bibitem[Gondhalekar and Jones(2011)]{GONDHALEKAR2011326}
Ravi Gondhalekar and Colin~N. Jones.
\newblock {MPC of constrained discrete-time linear periodic systems — A
  framework for asynchronous control: Strong feasibility, stability and
  optimality via periodic invariance}.
\newblock \emph{Automatica}, 47\penalty0 (2):\penalty0 326--333, 2011.
\newblock ISSN 0005-1098.
\newblock \doi{https://doi.org/10.1016/j.automatica.2010.10.021}.

\bibitem[Goulart et~al.(2006)Goulart, Kerrigan, and
  Maciejowski]{Goulart2006OptimizationConstraints}
Paul~J. Goulart, Eric~C. Kerrigan, and Jan~M. Maciejowski.
\newblock {Optimization over state feedback policies for robust control with
  constraints}.
\newblock \emph{Automatica}, 42\penalty0 (4):\penalty0 523--533, 2006.
\newblock ISSN 00051098.
\newblock \doi{10.1016/j.automatica.2005.08.023}.

\bibitem[Herceg et~al.(2013)Herceg, Kvasnica, Jones, and Morari]{mpt}
M.~Herceg, M.~Kvasnica, Colin~N. Jones, and M.~Morari.
\newblock {Multi-Parametric Toolbox 3.0}.
\newblock In \emph{Proc. European Control Conference (ECC)}, pages 502--510,
  Z\"urich, Switzerland, July 17--19 2013.
\newblock \url{http://control.ee.ethz.ch/~mpt}.

\bibitem[Herold et~al.(2022)Herold, Berkel, Trachte, and
  Specker]{Herold2022AControl}
Thilo Herold, Felix Berkel, Adrian Trachte, and Thomas Specker.
\newblock {A Computationally Efficient System Level Parametrization for Robust
  Model Predictive Control}.
\newblock In \emph{Proc. European Control Conference (ECC)}, pages 826--832.
  EUCA, 2022.
\newblock ISBN 9783907144077.
\newblock \doi{10.23919/ECC55457.2022.9838178}.

\bibitem[Hwangbo et~al.(2019)Hwangbo, Lee, Dosovitskiy, Bellicoso, Tsounis,
  Koltun, and Hutter]{Hwangbo2019}
Jemin Hwangbo, Joonho Lee, Alexey Dosovitskiy, Dario Bellicoso, Vassilios
  Tsounis, Vladlen Koltun, and Marco Hutter.
\newblock {Learning agile and dynamic motor skills for legged robots}.
\newblock \emph{Science Robotics}, 4\penalty0 (26):\penalty0 eaau5872, 2019.
\newblock \doi{10.1126/scirobotics.aau5872}.

\bibitem[Ibarz et~al.(2021)Ibarz, Tan, Finn, Kalakrishnan, Pastor, and
  Levine]{Ibarz2021}
Julian Ibarz, Jie Tan, Chelsea Finn, Mrinal Kalakrishnan, Peter Pastor, and
  Sergey Levine.
\newblock {How to train your robot with deep reinforcement learning: lessons we
  have learned}.
\newblock \emph{The International Journal of Robotics Research}, 40\penalty0
  (4-5):\penalty0 698--721, 2021.
\newblock \doi{10.1177/0278364920987859}.

\bibitem[Kouvaritakis and Cannon(2016)]{Kouvaritakis2016ModelStochastic}
Basil Kouvaritakis and Mark Cannon.
\newblock \emph{{Model Predictive Control: Classical, Robust, and Stochastic}}.
\newblock Springer, 2016.
\newblock ISBN 9783319248516.
\newblock \doi{10.1109/mcs.2016.2602738}.

\bibitem[Leeman et~al.(2023)Leeman, K{\"o}hler, Zanelli, Bennani, and
  Zeilinger]{leeman2023robust}
Antoine~P. Leeman, Johannes K{\"o}hler, Andrea Zanelli, Samir Bennani, and
  Melanie~N Zeilinger.
\newblock Robust nonlinear optimal control via system level synthesis.
\newblock \emph{arXiv preprint arXiv:2301.04943}, 2023.

\bibitem[L{\"{o}}fberg(2004)]{Lofberg2004}
Johan L{\"{o}}fberg.
\newblock Yalmip : A toolbox for modeling and optimization in matlab.
\newblock In \emph{Proc. CACSD Conference}, Taipei, Taiwan, 2004.

\bibitem[Mammarella et~al.(2018)Mammarella, Capello, Park, Guglieri, and
  Romano]{Mammarella2018Tube-basedDisturbance}
Martina Mammarella, Elisa Capello, Hyeongjun Park, Giorgio Guglieri, and
  Marcello Romano.
\newblock {Tube-based robust model predictive control for spacecraft proximity
  operations in the presence of persistent disturbance}.
\newblock \emph{Aerospace Science and Technology}, 77:\penalty0 585--594, 2018.
\newblock ISSN 12709638.
\newblock \doi{10.1016/j.ast.2018.04.009}.

\bibitem[Mayne et~al.(2005)Mayne, Seron, and
  Rakovi{\'c}]{Mayne2005RobustDisturbances}
David~Q. Mayne, Mar{\'\i}a~M. Seron, and SV~Rakovi{\'c}.
\newblock {Robust model predictive control of constrained linear systems with
  bounded disturbances}.
\newblock \emph{Automatica}, 41\penalty0 (2):\penalty0 219--224, 2005.
\newblock ISSN 00051098.
\newblock \doi{10.1016/j.automatica.2004.08.019}.

\bibitem[Mnih et~al.(2015)Mnih, Kavukcuoglu, Silver, Rusu, Veness, Bellemare,
  Graves, Riedmiller, Fidjeland, Ostrovski, and {others}]{mnih2015human}
Volodymyr Mnih, Koray Kavukcuoglu, David Silver, Andrei~A. Rusu, Joel Veness,
  Marc~G. Bellemare, Alex Graves, Martin Riedmiller, Andreas~K. Fidjeland,
  Georg Ostrovski, and {others}.
\newblock {Human-level control through deep reinforcement learning}.
\newblock \emph{Nature}, 518\penalty0 (7540):\penalty0 529--533, 2015.

\bibitem[Muntwiler et~al.(2020)Muntwiler, Wabersich, Carron, and
  Zeilinger]{MUNTWILER20205258}
Simon Muntwiler, Kim~P. Wabersich, Andrea Carron, and Melanie~N. Zeilinger.
\newblock {Distributed Model Predictive Safety Certification for Learning-based
  Control}.
\newblock \emph{IFAC-PapersOnLine}, 53\penalty0 (2):\penalty0 5258--5265, 2020.
\newblock ISSN 2405-8963.
\newblock \doi{https://doi.org/10.1016/j.ifacol.2020.12.1205}.

\bibitem[Parsi et~al.(2022)Parsi, Anagnostaras, Iannelli, and
  Smith]{Parsi2022ComputationallyTightening}
Anilkumar Parsi, Panagiotis Anagnostaras, Andrea Iannelli, and Roy~S. Smith.
\newblock Computationally efficient robust {MPC} using optimized constraint
  tightening.
\newblock In \emph{Proc. IEEE 61st Conference on Decision and Control (CDC)},
  pages 1770--1775. IEEE, 2022.

\bibitem[Sieber et~al.(2021)Sieber, Bennani, and Zeilinger]{Sieber2021AControl}
Jerome Sieber, Samir Bennani, and Melanie~N. Zeilinger.
\newblock {A System Level Approach to Tube-based Model Predictive Control}.
\newblock \emph{IEEE Control Systems Letters}, 6:\penalty0 776--781, 2021.
\newblock ISSN 24751456.
\newblock \doi{10.1109/LCSYS.2021.3086190}.

\bibitem[Sieber et~al.(2022)Sieber, Zanelli, Bennani, and
  Zeilinger]{Sieber2021SystemMPC}
Jerome Sieber, Andrea Zanelli, Samir Bennani, and Melanie~N. Zeilinger.
\newblock {System Level Disturbance Reachable Sets and their Application to
  Tube-based MPC}.
\newblock \emph{European Journal of Control}, 2022.
\newblock ISSN 0947-3580.
\newblock \doi{https://doi.org/10.1016/j.ejcon.2022.100680}.

\bibitem[Wabersich and
  Zeilinger(2018{\natexlab{a}})]{Wabersich2018LinearControl}
Kim~P. Wabersich and Melanie~N. Zeilinger.
\newblock {Linear model predictive safety certification for learning-based
  control}.
\newblock In \emph{Proc. Conference on Decision and Control (CDC)}, pages
  809--815. IEEE, 2018{\natexlab{a}}.
\newblock ISBN 9781665436595.
\newblock \doi{10.1109/CDC45484.2021.9682832}.

\bibitem[Wabersich and
  Zeilinger(2018{\natexlab{b}})]{Wabersich2018ScalableControl}
Kim~P. Wabersich and Melanie~N. Zeilinger.
\newblock {Scalable synthesis of safety certificates from data with application
  to learning-based control}.
\newblock In \emph{Proc. European Control Conference (ECC)}, pages 1691--1697.
  European Control Association (EUCA), 2018{\natexlab{b}}.
\newblock ISBN 9783952426982.
\newblock \doi{10.23919/ECC.2018.8550288}.

\bibitem[Wabersich and Zeilinger(2021)]{Wabersich2021ASystems}
Kim~P. Wabersich and Melanie~N. Zeilinger.
\newblock {A predictive safety filter for learning-based control of constrained
  nonlinear dynamical systems}.
\newblock \emph{Automatica}, 129:\penalty0 109597, 2021.
\newblock ISSN 00051098.
\newblock \doi{10.1016/j.automatica.2021.109597}.

\bibitem[Wabersich et~al.(2022)Wabersich, Hewing, Carron, and
  Zeilinger]{Wabersich2022ProbabilisticControl}
Kim~P. Wabersich, Lukas Hewing, Andrea Carron, and Melanie~N. Zeilinger.
\newblock {Probabilistic Model Predictive Safety Certification for
  Learning-Based Control}.
\newblock \emph{IEEE Transactions on Automatic Control}, 67\penalty0
  (1):\penalty0 176--188, 2022.
\newblock \doi{10.1109/TAC.2021.3049335}.

\end{thebibliography}
\end{document}